\newtheorem{lemma}{Lemma}
\newtheorem{proof}{proof}
\begin{document}
	
	\pagenumbering{arabic}
	\title{Blockchain-based Federated Learning for Industrial Metaverses: Incentive Scheme with Optimal AoI}
	
		\author{Jiawen Kang, Dongdong Ye, Jiangtian Nie, Jiang Xiao, Xianjun Deng,  Siming Wang,   \\ Zehui Xiong, Rong Yu, and Dusit Niyato 
		\IEEEcompsocitemizethanks{
			J. Kang, D. Ye, S. Wang and R. Yu are with School of Automation, Guangdong University of Technology, China. J. Nie and D. Niyato are with School of Computer Science and Engineering, Nanyang Technological University, Singapore. J. Xiao and X. Deng are respectively with School of Computer Science and Technology and the Department of Cyber Science and Engineering, Huazhong University of Science and Technology, China. Z. Xiong is with Pillar of Information Systems Technology and Design, Singapore University of Technology and Design, Singapore.  }
			
		\IEEEcompsocitemizethanks{This research is supported by National Key R\&D Program of China (No. 2020YFB1807802), NSFC under grant No. 62102099, Open Research Project of the State Key Laboratory of Industrial Control Technology, Zhejiang University, China (No. ICT2022B12), and is also supported, in part, by the programme DesCartes and is supported by the National Research Foundation, Prime Minister’s Office, Singapore under its Campus for Research Excellence and Technological Enterprise (CREATE) programme,  the National Research Foundation, Singapore under the AI Singapore Programme (AISG) (AISG2-RP-2020-019), and Singapore Ministry of Education (MOE) Tier 1 (RG16/20), and  is supported by the National Research Foundation, Singapore and Infocomm Media Development Authority under its Future Communications Research \& Development Programme, and is also supported by the SUTD SRG-ISTD-2021-165, the SUTD-ZJU IDEA Grant (SUTD-ZJU (VP) 202102), and the SUTD-ZJU IDEA Seed Grant (SUTD-ZJU (SD) 202101). (\textit{Corresponding author: Zehui Xiong, zehui\_xiong@sutd.edu.sg}) }
	}

	\maketitle 
	\pagestyle{headings}
	\begin{abstract}
	The emerging industrial metaverses realize the mapping and expanding operations of physical industry into virtual space for significantly upgrading intelligent manufacturing. The industrial metaverses obtain data from various production and operation lines by Industrial Internet of Things (IIoT), and thus conduct effective data analysis and decision-making, thereby enhancing the production efficiency of the physical space, reducing operating costs, and maximizing commercial value. However, there still exist bottlenecks when integrating metaverses into IIoT, such as the privacy leakage of sensitive data with commercial secrets, IIoT sensing data freshness, and incentives for sharing these data. In this paper, we design a user-defined privacy-preserving framework with decentralized federated learning for the industrial metaverses. To further improve privacy protection of industrial metaverse, a cross-chain empowered federated learning framework is further utilized to perform decentralized, secure, and privacy-preserving data training on both physical and virtual spaces through a hierarchical blockchain architecture with a main chain and multiple subchains. Moreover, we introduce the age of information as the data freshness metric and thus design an age-based contract model to motivate data sensing among IIoT nodes. Numerical results indicate the efficiency of the proposed framework and incentive mechanism in the industrial metaverses.
	\end{abstract}
	
	\begin{IEEEkeywords}
		Metaverse, blockchain, federated learning, contract theory, incentive mechanism, age of information
	\end{IEEEkeywords}
	
	\section{Introduction}
	As the current wave of the industrial revolution, the digital transformation of manufacturing industry has taken place.
	Different from the traditional industrial digitization to improve the physical space through digital operations, industrial metaverse creates a virtual space by transforming the physical space based on the real interaction and persistence \cite{huynh2022artificial}.
	One of the most potential applications of industrial metaverse is 3D simulation, modeling, and architectural design \cite{wang2022survey}. For instance, an open platform called Omniverse has been built by NVIDIA \cite{hummel2019leveraging}. Multi-user real-time 3D simulation and visualization of physical entities and properties are supported in a shared virtual space for industrial applications, such as automotive design. 
	%
	To build an industrial metaverse, the Industrial Internet of Things (IIoT) play an essential and foundational role in networking and communication.  
	The IIoT nodes collect a large amount of sensing data to bridge the virtual space and physical space, thus  providing users with a completely real, lasting, and smooth interactive experience in the industrial metaverses \cite{ning2021survey}. 
	While the industrial metaverse requires powerful hardware equipments, rich network resources and advanced Artificial Intelligence (AI) technologies as the base.
	By collaboratively applying cutting-edge technologies, such as Federated Learning (FL), blockchain and digital twins, the industrial metaverse for IIoT can significantly modernize digital operations in the current digital revolution \cite{huynh2022artificial}.
	
	Although industrial metaverse brings amazing changes to industrial areas, this technology is still in its infancy. There exist challenging  bottlenecks for wide deployment and future popularization:
	1) The industrial metaverse is prone to leaking IIoT nodes' privacy. Since the industrial metaverse will collect more data from IIoT nodes than ever before, the consequences will be worse than ever if IIoT nodes' data is not effectively protected \cite{wang2022survey}. Thereby, due to privacy concerns,  IIoT nodes may be unwilling to share  data with private information in industrial metaverses, which hinders the comprehensive data analysis using machine learning technologies for  intelligence improvement of the industrial metaverse. 
	2) To ensure immersive services in the industrial metaverse, fresh sensing data is significantly important to enhance the service quality of time-sensitive services.
	Since the IIoT nodes are energy-constrained, they can not respond to each data request of learning-based metaverse services \cite{corneo2019age}. As a result,  the IIoT nodes may not provide fresh data or join metaverse services without a reasonable incentive mechanism. It is still challenging how to incentivize IIoT nodes with fresh data in industrial metaverse services.
	
	To address the above challenges, in this paper, we first apply federated learning and cross-chain technologies to design  a user-defined privacy-preserving framework for sensing data analysis in industrial metaverses. To improve service quality of industrial metaverse, Age of Information (AoI) is introduced as the data-freshness metric of sensing data for industrial metaverse services. We then design an age-based contract model to incentivize data sensing among IIoT nodes.   The main contributions of this paper are summarized as follows:
	\begin{itemize}
		\item We design a new privacy-preserving framework for industrial metaverse, in which IIoT nodes can customize to upload non-sensitive sensing data to the virtual space for learning-based metaverse tasks, and keep the sensitive sensing data in the physical space for privacy protection. 
		
		\item To further improve privacy protection of industrial metaverse, a cross-chain empowered FL framework is further utilized to  perform decentralized, secure, and privacy-preserving data training on both physical and virtual spaces through a hierarchical blockchain architecture with a main chain and multiple subchains. The cross-chain interaction is executed to finish secure model aggregation and update. 
		
		\item We apply Age of Information as the data-freshness metric of sensing data to optimize the time-sensitive learning tasks in industrial metaverses, along with designing  an optimal age-based  contract to incentivize data sensing among IIoT nodes for industrial metaverse.   
	\end{itemize}
	
	\begin{figure*}[t]\centering     \includegraphics[width=0.75\textwidth]{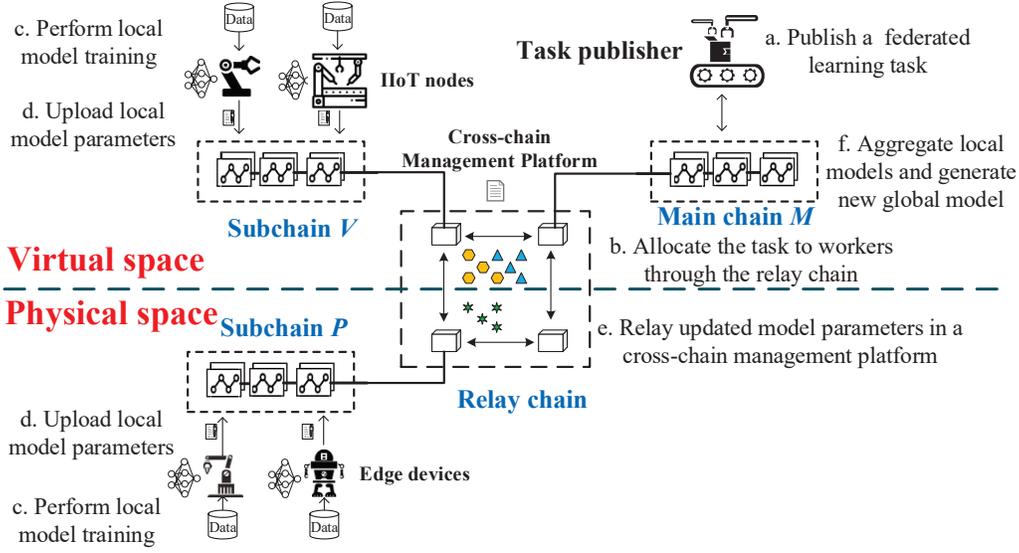}     \caption{A cross-chain empowered federated learning framework for the industrial metaverses.}     \label{system_model}     
	\end{figure*}

	The rest of this paper is organized as follows.  The proposed cross-chain empowered FL for industrial metaverse is introduced in Section II. Problem formulation and solutions for  contract-based incentive mechanism are presented in Section III and Section IV, respectively. Section V shows the performance evaluation of the proposed incentive mechanism and Section VI concludes this paper finally.

	\section{Cross-chain Empowered Federated Learning Framework for Industrial Metaverses}
	
	\subsection{User-defined Privacy-preserving Training Framework}
	As shown in Fig. \ref{system_model}, a hierarchical blockchain framework for decentralized federated learning consists of  a main chain  and multiple subchains. The main chain works as the parameter server and the subchains manage local model updates generated by IIoT nodes acting as workers. There exists a physical space and a virtual space in the framework. Here, IIoT nodes in the physical space can be sensors in smart manufacturing factories or the inspection drones in smart grids. Multiple IIoT nodes in the physical space  are managed by a subchain $P$ during training. However, due to the privacy concerns, these physical workers may be not willing to upload all privacy-sensitive data to the industrial metaverse directly. 
	
	The virtual space is established on i) collected data from the IIoT nodes in the physical space and ii) the online  generated data during node interaction or data analysis. A virtual node in the industrial metaverse is built by mapping and synchronizing the data of a physical IIoT node to the virtual space. Multiple virtual nodes as model training workers are managed by  a subchain $V$ in the virtual space. Since the physical nodes are reluctant to upload all data with privacy-sensitive information to the virtual space in the industrial metaverse. The datasets of virtual nodes are incomplete. If a learning task is only trained by the virtual nodes, the accuracy of the learning model will be decreased and the generalization ability will be poor. To this end, a user-defined privacy-preserving training framework is designed with FL for industrial metaverses, in which IIoT nodes can customize to upload non-sensitive sensing data to the virtual space for learning-based metaverse tasks or applications, and keep the sensitive sensing data locally in the physical space for strong privacy protection. The FL technology breaks data islands between the virtual space and physical space,  and enables  collaborative learning among the virtual and physical nodes.

	\subsection{Cross-chain Interaction for Decentralized FL}
	To further improve privacy protection of industrial metaverse, a cross-chain empowered FL framework is further designed to  perform decentralized, secure, and privacy-preserving data training on both physical and virtual spaces through the hierarchical blockchain framework with a main chain and multiple subchains \cite{9785702}. The cross-chain interaction is executed to finish secure model aggregation and update \cite{jin2022towards,jin2021cross}. In Fig. 1, the workflow of the proposed cross-chain empowered FL framework is presented as follows.
	
	\subsubsection{\textit{Step 1: Publish a  federated learning task}} Each task publisher (e.g., a smart manufacturing company) sets up a learning task (e.g., anomaly prediction of product lines) and sends the federated learning request to the main chain $M$ (Step $a$). 
	
	\subsubsection{Step 2: Allocate the task to the workers} The main chain $M$ sends the learning task to the relay chain that  is a cross-chain management platform \cite{siriweera2021internet}. This platform is responsible for forwarding, verifying data (e.g., block data), and bridging connections among blockchains. The relay chain first verifies the task information and sends the learning task to the workers' subchain $V$ of the virtual space and the subchain $P$ of the physical space, respectively (Step $b$). 
	
	\subsubsection{Step 3: Perform a learning task in both virtual space and physical space} In the physical space, legitimate edge devices (e.g., industrial robots, IIoT sensors) can join a training task and perform local model training on their local datasets. Each dataset is generated from personal applications (e.g., product line detection services) or from the surrounding environment (e.g., sensors on smart manufacturing factories). Each node trains a given global model from its task publisher and generates local model updates (Step $c$).  Similarly, in the virtual space, the corresponding legitimate virtual nodes as workers  also participate in the learning task and  train the given global model at the same time.
	
	\subsubsection{Step 4: Aggregate model and update global model} When the nodes in the virtual and physical spaces complete the  training task, the updated local models are verified and uploaded to their subchains firstly for secure management. These models are  checked by the relay chain and transmitted to the main chain (Steps $d$, $e$). Then, the updated local models are aggregated on the main chain to generate a new global model $M$ (Step $f$). Finally, the workers download the latest global model from their subchains and train the new model for the next iteration until satisfying the given accuracy requirements. The final global model is sent back to the task publisher, and the task publisher sends the payments to the workers according to their contributions \cite{jin2021cross}. 
	
	\section{Problem Formulation}
	In this section, to incentivize data sharing among IIoT nodes for time-sensitive FL tasks, we first introduce Age of Information (AoI) as the metric to evaluate the data freshness.  We then formulate the utility functions of both workers and service provider (i.e., the task publisher) in industrial metaverse.
	
	We consider a mixed reality based remote monitoring case  as an example of industrial metaverse scenarios with a service provider and $M$ workers. 
	The service provider acting as task publisher motivates $M$ workers to participate in learning tasks.  The average time of a global model iteration in the cross-chain empowered FL consists of  three parts: 1) the average time of completing a global model iteration of federated learning (denoted as $t_u$);  2) the average time of completing a consensus process for a global model iteration among blockchains (denoted as $t_c$);  3)  the average time of collecting and processing the data for model training (denoted as $c_m t, c_m \in \mathbb{N}$, and $t = t_u + t_c$).
	We consider that the federated learning is synchronous, and $t_u$ is the same for all the workers \cite{lim2020hierarchical, lim2020information}. 
	$t_c$ is the same for each global model iteration because of using the same relay chain \cite{shen2020blockchain}. 
	For each worker $m$, the time of collecting and processing the data for model training is a constant \cite{lim2020hierarchical, lim2020information}. 
	
	
	\subsection{Age of Information and Service Latency for Industrial Metaverse}
	Recently, Age of Information (AoI) has emerged as a metric to quantify  information freshness at the destination, which is a promising metric to improve performance of time-critical applications and services, especially for industrial monitoring and sensor networks \cite{lim2020information}. In FL based remote monitoring, without loss of generality, we consider a model training request arrives at the beginning of each epoch. We focus on the AoI and service latency of the FL with data caching buffer in workers \cite{lim2020hierarchical, lim2020information}.
	Worker $m$ periodically updates its cached data. The periodic interval $\theta_m$ is independent of the period in which the request arrives, which is denoted as 
	\begin{equation}
		\begin{split}
			\theta_m = c_m t + a_m t, a \in \mathbb{N},
		\end{split}
	\end{equation}
	where $a_m$ is the duration time from finishing data collection to the beginning of the next phase of  data collection, which includes multiple time periods, {such as service time period or idle time period \cite{zhou2021towards}}.
	

	According to \cite{lim2020information}, if an  FL training request is raised at the $z^{th}$ period belonging to the data collection phase, the service latency is $c_m t + t - (z-1) t$. 
	Otherwise, if a request is raised at any remaining time period in the update cycle, the service latency is $t$. 
	Thus, the average service latency $D_m$ \cite{lim2020information} of the blockchain-based FL for worker $m$ is 
	\begin{equation} \label{eqation:D_s}
		\begin{split}
			\overline{D}_m = \frac{c_m}{ c_m + a_m } \left[ \frac{c_m t}{2} (c_m + 3) \right]  + \frac{a_m t}{ c_m + a_m }. \\
		\end{split}
	\end{equation}

	If the FL request is raised during the data collection phase or at the beginning of phase $(c_m + 1)t$, the AoI is $t$ which is the minimum value. 
	Otherwise, if a request is raised at period $lt$, the AoI will be $[l - (c_m + 1) + 1]t$, where $l \geq (c_m + 2)t$. 
	Thus, the average AoI is denoted as 
	\begin{equation}\label{eqation:A_s}
		\begin{split}
			\overline{A}_m = \frac{t}{ c_m + a_m } \left[ c_m + 1 + \frac{(a_m - 1)(a_m + 2) }{2}   \right].  \\
		\end{split}
	\end{equation}

	When $t$ is fixed, the update cycle $\theta_m$ is affected by $c_m$ or $a_m$. In this paper, we consider a general case that has an adjustable update phase and a fixed idle phase of workers. In other words, when $a_m = a $ is fixed, we have $c_m = \frac{\theta_m }{t} - a $. We replace $c_m$ with $\theta_m$. Thus, $\overline{D}_m$ and $\overline{A}_m$ are simplified to
	\begin{equation} \label{eqation:D_s_2}
		\begin{split}
			\overline{D}_m(\theta_m) & = \frac{c_m}{ c_m + a_m } \left[ \frac{c_m t}{2} (c_m + 3) \right]  + \frac{a_m t}{ c_m + a_m }  \\
			& = \frac{(\theta_m - at)^3}{2t\theta_m} + \frac{3 (\theta_m - at)^2}{2\theta_m} + \frac{at^2}{\theta_m},  \\
		\end{split}
	\end{equation}
	and
	\begin{equation}\label{eqation:A_s_2}
		\begin{split}
			\overline{A}_m(\theta_m) & = \frac{t}{ c_m + a_m } \left[ c_m + 1 + \frac{(a_m - 1)(a_m + 2) }{2}   \right]  \\
			& =  \frac{t \theta_m }{\theta_m -at} + \frac{t^2}{\theta_m -at} \left(  \frac{a^2 - a }{2} \right) .  \\
		\end{split}
	\end{equation}
	The case is established under $\theta_m < at, \forall m \in M$.
	Since $\theta_m = a_m t + c_m t$, $\theta_m > a_m t $ always holds. 
	When $\theta_m > at $, $\overline{D}_m(\theta_m)$ is a convex function with respect to $\theta_m$.
	When $\theta_m > at $ and $a>1$, $\overline{A}_m(\theta_m)$ is also a convex function with respect to $\theta_m$. 

	\subsection{Worker Utility}

	The utility of worker $m$ is the difference between the received monetary reward $R_m$ and its cost $C_m$ of participating in FL training task.
	\begin{equation}
		\begin{split}
			U_m = R_m - C_m.  \\
		\end{split}
	\end{equation}
	Referring to \cite{zhou2021towards}, $C_m$ is defined as 
	\begin{equation}
		\begin{split}
			C_m = \frac{\delta_m}{ \theta_m}, \\
		\end{split}
	\end{equation}
	where $\delta_m$ is the unit update cost and is related to data collection, computation, transmission, and consensus \cite{zhang2018towards}.

	Thus, the utility of worker $m$ becomes
	\begin{equation}
		\begin{split}
			U_m = R_m - \frac{\delta_m}{ \theta_m}.  \\
		\end{split}
	\end{equation}
	Since the service provider does not know the unit update cost of  each worker, it can sort the workers into discrete types according to statistical distributions of the worker types from historical data, and thus optimize the expected utility of the service provider. Specifically,  we divide the workers into different types and denote the $n$-th type worker as $\delta_n$.
The workers can be classified into a set $\Delta= \left\{ \delta_n:1 \leq n \leq N \right\}$ of $N$ types.
	In a non-decreasing order, the workers' types are sorted as:
	$\delta_1 \geq \delta_2 \geq \dots \geq \delta_N$. For the convenience of explanation, the worker with type $n$ is called the type-$n$.
	The probability that a worker belongs to a certain type-$n$  is $Q_{n}$ and we have $\sum_{ n \in N } Q_{n} =1$.
	Thus, the utility of the type-$n$ worker can be rewritten as
	\begin{equation}
		\begin{split}
			U_n = R_n - \frac{\delta_n }{ \theta_n}.  \\
		\end{split}
	\end{equation}
	To simplify the description, we define the update frequency as $f_n = \frac{1}{\theta}$, $\gamma_n = \frac{1}{\delta_n}$. Thus, the the utility of the type-$n$ worker can be rewritten as 
	\begin{equation}
		\begin{split}
			U_n = R_n - \frac{f_n }{ \gamma_n}.  \\
		\end{split}
	\end{equation}
	The workers' types, i.e., $\delta_1 \geq \delta_2 \geq \dots \geq \delta_N$ is rewritten as $\gamma_1 \leq \gamma_2 \leq \dots \leq \gamma_N$. 
	
	\subsection{Service Provider Utility}
	Note that a large AoI and a large service latency lead to a bad immersive experience and also reduce the service provider's satisfaction in industrial metaverses \cite{jiang2021reliable}.  
	Thus, the service provider's satisfaction function obtained from the type-$n$ worker is defined as follows,
	\begin{equation}
		\begin{split}
			G_n = \beta \log\left(  g_n(f_n)  \right),
		\end{split}
	\end{equation}
	where $\beta$ is the unit profit for the satisfaction, and $g_n$ is the performance obtained from the type-$n$ worker. Referring to \cite{zhou2021towards}, $g_n$ is defined as 
	\begin{equation}\label{eqation:g_s}
		\begin{split}
			g(f_n) = \alpha_n (K - \overline{A}_n) + (1 - \alpha_n) ( H - \overline{D}_n ),
		\end{split}
	\end{equation}
	where $\alpha_n$ is the preference factor on AoI for the service provider to the type-$n$ worker. The larger $\alpha_n$ means the service provider prefers the AoI more.   $K$ and $H$ are the maximum tolerant AoI and service latency, respectively.
	According to the types of workers, the utility of the service provider is 
	\begin{equation}
		\begin{split}
			U_s = \sum_{ n \in N} M Q_n ( G_n  - R_n).
		\end{split}
	\end{equation}

	\section{ Optimal Contract Design}\label{MDCD}
	\subsection{Contract Formulation}
	Noted that the types of workers are private information that is not visible to the service provider, namely, there exists information asymmetry between workers and the service provider. Under the information asymmetry, the service provider uses contract theory to find out the best workers.
	Here, the service provider is the leader for designing contracts, and each worker selects the best contract item according to its type. 
	The contract item can be denoted as $\Phi = \left\{ (\gamma_{n}, f_{n}, R_{n}), n \in N   \right\}$, with item $(\gamma_{n}, f_{n}, R_{n})$ for type-$n$ worker. 
	In order to ensure that each worker automatically selects the contract item designed for its specific type, the feasible contract must satisfy the following   Incentive Compatibility (IC) constraint:
	
	\begin{equation}
		\begin{split}\label{IC1}
			R_{n} - \frac{  f_n }{ \gamma_{n} }  \geq  R_{i} - \frac{  f_i }{ \gamma_{n} }  , \forall n, i \in N, 
		\end{split}
	\end{equation}
      and the Individual Rationality (IR) constraint:
	
	\begin{equation}\label{IR2}
		\begin{split}
			R_{n} - \frac{  f_n }{ \gamma_{n} }  \geq 0, \forall n \in N.
		\end{split}
	\end{equation}
	With the IC and IR constraints, the problem of maximizing the expected utility of the service provider is formulated as
	\begin{eqnarray}\label{problem1}
		\begin{split}
			\textbf{Problem 1:} & \quad \max_{ \bm{f}, \bm{R}} U_s \\
			\text{s.t.} & \quad \text{IC Constraints in} (\ref{IC1}), \text{IR Constraints in} (\ref{IR2}),
		\end{split}
	\end{eqnarray}
	where $\bm{f} =[f_{n}]_{1 \times N}$, and $\bm{R} =[R_{n}]_{1 \times N}$. 
	There are $N$ IR constraints and $N(N-1)$ IC constraints, making it quite difficult to  solve \textbf{{Problem 1}} directly (\ref{problem1}).

	\begin{algorithm}[t] 
		\caption{Finding Optimal Contract}
		\label{alg:r2p}
		
		\For{$n = 1, \dots, N$}
		{
			
			Initialize the iteration index $z = 0$, the step size $\varphi$, the empty vector $\bm{\mathcal{U}}_{s,n}$, $f_{min}  = f_n^{z} = 10^{-5}$ and $f_{max}$ \\
			
			\While{$f_n^{z} < f_{max}$}
			{
				
				Compute $U_{s,n}(f_n^{z})$ \\
				Set $\bm{\mathcal{U}}_{s,n}(z) = U_{s,n}(f_n^{z})$ \\
				$f_n^{z} = f_n^{z} + \varphi$ \\
				$z = z + 1$ \\
				
			}
			
			Get the optimal update frequency $f_n^{\star}$ for type-$n$ worker  by using the index of the maximum value in $\bm{\mathcal{U}}_{s,n}$  \\
			
		}

		Get the vector of the optimal update frequency $\bm{f}^{\star '} = \left\{ f_1^{\star}, \dots, f_n^{\star}, \dots, f_N^{\star} \right\}$

		\If{ $\bm{f}^{\star '}$ does not satisfy the monotonicity condition}
		{
			Use "Bunching and Ironing" algorithm \cite{gao2011spectrum} to adjust $\bm{f}^{\star '}$ and output $\bm{f}^{\star}$
		}

		\If{ $\bm{f}^{\star '}$ satisfies the monotonicity condition}
		{
			$\bm{f}^{\star}$ = $\bm{f}^{\star '}$
		}

		\For{$n = 1, \dots, N$}
		{
			Based on Eq. (\ref{eq:reward}), we will compute the optimal reward $R_n^{\star}$    
			
		}
		Get the vector of the optimal reward $\bm{R}^{\star '} = \left\{ R_1^{\star}, \dots, R_n^{\star}, \dots, R_N^{\star} \right\}$
		
		Output $\bm{f}^{\star}$ and $\bm{R}^{\star}$
		
	\end{algorithm}

	\subsection{Optimal Contract Solution}
	We simplify the \textbf{{Problem 1}} (\ref{problem1}) through the following steps. 
	Firstly, the IR and IC constraints can be reduced by \textbf{\textit{Lemma \ref{lemma:three_conditions}}}.
	Then, the service provider's optimal reward $R_n^{*}$ is derived as shown in \textbf{\textit{Lemma \ref{lemma:optimal_reward}}}. 
	Finally, the optimal contract $f_n^{*}$ can be found. More specifically,

	\begin{lemma}\label{lemma:three_conditions}
		With weakly incomplete information, a feasible contract must satisfy the following three conditions:
		
		\noindent $(a.1)  R_{1} - \frac{  f_1 }{ \gamma_{1} } \geq 0$ ;\\
		\noindent $(a.2) R_{1} \leq \dots \leq R_{N}$ and  $f_{1} \leq \dots \leq f_{N}$ ; \\
		\noindent $(a.3) R_{n} - \frac{  f_n }{ \gamma_{n} } \geq  R_{n-1} - \frac{  f_{n-1} }{ \gamma_{n} }, \forall n \in \left\{   2, \dots, N \right\}$;\\
		\noindent $(a.4)  R_{n} - \frac{  f_n }{ \gamma_{n} } \geq R_{n+1} - \frac{  f_{n+1} }{ \gamma_{n} }, \forall n \in \left\{   1, \dots, N-1 \right\}$.\\	
	\end{lemma}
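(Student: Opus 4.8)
The plan is to show that the $N$ individual-rationality constraints (\ref{IR2}) and the $N(N-1)$ incentive-compatibility constraints (\ref{IC1}) of \textbf{Problem 1} collapse to the conditions $(a.1)$--$(a.4)$. Three of these are immediate: $(a.1)$ is precisely the IR constraint (\ref{IR2}) for the type-$1$ worker, while $(a.3)$ and $(a.4)$ are precisely the IC constraints (\ref{IC1}) for the adjacent pairs $(n,n-1)$ and $(n,n+1)$, respectively. Hence the only substantive part of the necessity claim is the monotonicity condition $(a.2)$; the reason it suffices to keep only $(a.1)$ rather than all $N$ IR constraints comes from the converse direction, which I would also record.

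To establish $(a.2)$, I would fix any two types $n$ and $i$ and write the IC constraint (\ref{IC1}) in both directions — type $n$ not preferring item $i$, and type $i$ not preferring item $n$. Adding the two inequalities cancels $R_n$ and $R_i$ and, after rearranging, yields $(f_i - f_n)\left(\frac{1}{\gamma_n}-\frac{1}{\gamma_i}\right) \ge 0$. Because the types are ordered as $\gamma_1 \le \cdots \le \gamma_N$, choosing $n>i$ makes the second factor nonpositive, forcing $f_n \ge f_i$; substituting $f_n \ge f_i$ back into the single IC inequality (\ref{IC1}) for $(n,i)$ then gives $R_n - R_i \ge (f_n-f_i)/\gamma_n \ge 0$. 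This yields $f_1 \le \cdots \le f_N$ and $R_1 \le \cdots \le R_N$, i.e. $(a.2)$.

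For completeness — and because it is what actually justifies replacing the constraints of \textbf{Problem 1} by $(a.1)$--$(a.4)$ — I would verify the converse. For the ``downward'' IC of type $n$ against item $i$ with $i<n$, telescope $(a.3)$ over $k=i+1,\dots,n$: $R_n-R_i=\sum_{k=i+1}^{n}(R_k-R_{k-1})\ge \sum_{k=i+1}^{n}(f_k-f_{k-1})/\gamma_k$; since $f_k-f_{k-1}\ge 0$ by $(a.2)$ and $\gamma_k\le\gamma_n$ for $k\le n$, each summand is at least $(f_k-f_{k-1})/\gamma_n$, so the sum telescopes to $(f_n-f_i)/\gamma_n$, which is exactly the required IC inequality. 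The ``upward'' case $i>n$ is symmetric, telescoping $(a.4)$ and using $\gamma_{k-1}\ge\gamma_n$ for $k-1\ge n$. Finally every IR constraint follows from $(a.1)$ and the downward IC just shown, via $R_n-f_n/\gamma_n\ge R_1-f_1/\gamma_n\ge R_1-f_1/\gamma_1\ge 0$, where the middle step uses $\gamma_n\ge\gamma_1$ and $f_1\ge 0$.

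The algebraic rearrangements are routine; the one delicate point is the telescoping in the converse, where one must simultaneously use the monotonicity of $\bm{f}$ and the ordering of the types to keep every inequality pointing the correct way — $\gamma_k\le\gamma_n$ along the downward chain and $\gamma_{k-1}\ge\gamma_n$ along the upward chain. I expect that bookkeeping to be the main (and essentially only) obstacle; the necessity direction that the lemma asserts is mechanical.
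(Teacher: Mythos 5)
Your proposal is correct and complete. The paper itself does not prove this lemma---its ``proof'' is only a citation to an external reference---but your argument (pairwise summation of the two cross IC constraints to get monotonicity of $\bm{f}$ and then of $\bm{R}$, followed by telescoping the local downward and upward IC constraints with the ordering $\gamma_1 \leq \dots \leq \gamma_N$ to recover all $N(N-1)$ IC constraints and all $N$ IR constraints) is exactly the standard contract-theoretic reduction that the cited reference carries out, and every step you give checks out. The only cosmetic caveat is that with the non-strict ordering of types, two equal $\gamma$'s force the corresponding contract items to be utility-equivalent rather than strictly ordered, which does not affect the claim.
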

	\begin{proof}
		Please refer to \cite{ding2020optimal}. 
	\end{proof}
	
	Constraint $(a.1)$ ensures that the utility of each worker receiving the contract item of its type is non-negative, which corresponds to the IR constraints. 
	Constraints $(a.2)$, $(a.3)$ and $(a.4)$ are related to IC constraints.
	Constraint $(a.2)$ shows that a worker type with a lower cost can provide the service provider with a lower update cycle, while getting more rewards from the service provider. 
	Constraints $(a.3)$ and $(a.4)$ show that the IC conditions can be reduced as the Local Downward Incentive Compatibility (LDIC) and the Local Upward Incentive Compatibility (LUIC) with monotonicity, respectively.

	Based on \textbf{\textit{Lemma \ref{lemma:three_conditions}}}, we can obtain the optimal rewards for any update cycle by the following \textbf{\textit{Lemma \ref{lemma:optimal_reward}}}:
	\begin{lemma}\label{lemma:optimal_reward}
		For a feasible set of update frequency $\bm{f}$ satisfying $f_1 \leq f_{2} \leq \dots \leq f_n \leq \dots \leq f_{N}$, we can obtain the optimal reward by the following formula
		\begin{equation}\label{equation:R0}
			R_n^{*} = \left\{ \begin{array}{l}
				\frac{  f_1 }{ \gamma_{1} } , n =1 \\
				R_{n-1} + \frac{ f_n }{  \gamma_{n} } - \frac{ f_{n-1} }{  \gamma{n} } ,\text{otherwise}.\\
			\end{array} \right.
		\end{equation}
	\end{lemma}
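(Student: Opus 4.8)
The plan is to use the fact that the service provider's objective $U_s = \sum_{n \in N} M Q_n ( G_n - R_n)$ is strictly decreasing in each reward $R_n$, whereas $G_n$ depends on $f_n$ only. Hence, for a fixed feasible frequency profile $\bm{f}$ with $f_1 \leq f_2 \leq \dots \leq f_N$, maximizing $U_s$ over $\bm{R}$ amounts to pushing every $R_n$ down to the smallest value permitted by conditions $(a.1)$--$(a.4)$ of \textbf{\textit{Lemma \ref{lemma:three_conditions}}} (which, by that lemma, are equivalent to the full IR/IC system). First I would identify the constraints that bound the rewards \emph{from below}: the IR condition $(a.1)$ forces $R_1 \geq \frac{f_1}{\gamma_1}$, and rearranging the LDIC condition $(a.3)$ gives $R_n \geq R_{n-1} + \frac{f_n - f_{n-1}}{\gamma_n}$ for $n \geq 2$.

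Next I would convert this into an induction. The base case sets $R_1^{*} = \frac{f_1}{\gamma_1}$. For the inductive step, note that the lower bound on $R_n$ from $(a.3)$ is increasing in $R_{n-1}$, so the chain of per-coordinate minimizations is consistent: choosing $R_{n-1}$ at its minimum also minimizes the lower bound on $R_n$. Setting $R_n^{*} = R_{n-1}^{*} + \frac{f_n - f_{n-1}}{\gamma_n}$ therefore yields the coordinatewise-smallest feasible reward vector, which by the monotone dependence of $U_s$ on $\bm{R}$ is exactly the maximizer; this is the recursion claimed in (\ref{equation:R0}).

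The remaining and, I expect, most delicate step is to verify that the recursively defined $\bm{R}^{*}$ is actually \emph{feasible}, i.e., that tightening $(a.1)$ and every $(a.3)$ does not violate $(a.2)$ or any $(a.4)$. For $(a.2)$: since $f_n \geq f_{n-1}$ and $\gamma_n > 0$, the increment $R_n^{*} - R_{n-1}^{*} = \frac{f_n - f_{n-1}}{\gamma_n} \geq 0$, so $R_1^{*} \leq \dots \leq R_N^{*}$, while $f_1 \leq \dots \leq f_N$ holds by hypothesis. For $(a.4)$, substituting $R_{n+1}^{*} = R_n^{*} + \frac{f_{n+1} - f_n}{\gamma_{n+1}}$ into $R_n^{*} - \frac{f_n}{\gamma_n} \geq R_{n+1}^{*} - \frac{f_{n+1}}{\gamma_n}$ reduces, after cancellation, to $\left( f_{n+1} - f_n \right)\left( \frac{1}{\gamma_n} - \frac{1}{\gamma_{n+1}} \right) \geq 0$, which holds because $f_{n+1} \geq f_n$ and $\gamma_n \leq \gamma_{n+1}$. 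I would also record the sandwich $R_{n-1}^{*} + \frac{f_n - f_{n-1}}{\gamma_n} \leq R_n \leq R_{n-1}^{*} + \frac{f_n - f_{n-1}}{\gamma_{n-1}}$ implied jointly by $(a.3)$ and the $(a.4)$ at index $n-1$, which confirms that the lower endpoint is attainable (the interval is nonempty since $\gamma_{n-1} \leq \gamma_n$). Combining feasibility with the minimality argument above completes the proof.
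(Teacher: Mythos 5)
Your proposal is correct, and it is the standard argument. Note that the paper itself does not prove this lemma --- it defers entirely to a citation of \cite{ding2020optimal} --- so your write-up supplies exactly the reasoning the paper omits: $U_s$ is strictly decreasing in each $R_n$ while $G_n$ depends only on $f_n$, so one binds the bottom type's IR constraint and the LDIC constraints $(a.3)$ sequentially, then checks that the resulting coordinatewise-minimal reward vector still satisfies the monotonicity condition $(a.2)$ and the LUIC constraints $(a.4)$; your reduction of $(a.4)$ to $(f_{n+1}-f_n)\bigl(\tfrac{1}{\gamma_n}-\tfrac{1}{\gamma_{n+1}}\bigr)\geq 0$ is the right computation. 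The one point worth flagging: \textbf{\textit{Lemma \ref{lemma:three_conditions}}} as stated asserts only that a feasible contract \emph{must} satisfy $(a.1)$--$(a.4)$ (necessity), whereas your minimization argument also needs the converse --- that $(a.1)$--$(a.4)$ together imply the full set of IR and IC constraints --- which is a standard but separate fact (the telescoping of local ICs into global ICs under monotonicity) that you invoke implicitly when you declare the conditions ``equivalent to the full IR/IC system''; a complete proof should either cite or sketch that sufficiency step.
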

	\begin{proof}
		Please refer to \cite{ding2020optimal}.
	\end{proof}
	The optimal reward in (\ref{equation:R0}) is rewritten as 
	\begin{equation}\label{eq:reward}
		R^{*}_n = \frac{   f_1 }{ \gamma_{1} } + \sum^{n}_{i=1} \Delta_{i}, n \in N, 
	\end{equation}
	where $\Delta_{1}=0$ and $\Delta_{i} = \frac{  f_i }{  \gamma_{i} } - \frac{ f_{i-1} }{  \gamma_i }$,$i= 2,\dots, N$.
	We substitute the optimal rewards (\ref{eq:reward}) into the service provider's utility and we get the service provider's utility in terms of $\bm{f}$, which provide the analysis of the optimal update cycle $\bm{f}$. Thus, the optimization problem (\ref{problem1}) is rewritten as
	\begin{equation}\label{problem2}
		\begin{split}
			\textbf{Problem 2:} & \quad \max_{\bm{f}}  \quad  U_{s} \\
			\textrm{s.t.} & \quad f_{1} \leq \dots \leq f_{N},
		\end{split}
	\end{equation}
	where $U_{s} = \sum_{n \in N} U_{s,n} = \sum_{n \in N} M \left(  Q_n G_n - b_n f_n \right) $. 
	Here, $b_{n}=\frac{Q_{n}}{\gamma_n} +\left(\frac{1}{{\gamma}_{n}}-\frac{1}{{\gamma}_{{n}+1}}\right)\sum_{j=n+1}^{N} Q_{j}$ with $ n<N $, $b_{n}=\frac{ Q_{n} }{{\gamma}_{n}}$ with $n=N$.
	
	Noted that $U_{s}$ is not a concave function. Thus, we propose a greedy algorithm to find the optimal contract. The detail of the greedy algorithm is as shown in \textbf{Algorithm \ref{alg:r2p}}. 
	In order to deploy the algorithm efficiently, we could evaluate its computational complexity, and find that its computational complexity is
	$\mathcal{O}(N \log \left(   \frac{f_{max} - f_{min}}{\varphi} \right) )$. The result indicates that the computational resource consumed by \textbf{Algorithm \ref{alg:r2p}} is at a moderate level, so it is practical to  adopt the proposed algorithm to the blockchain-based FL applications.

\begin{table}\label{parameter}
	\renewcommand{\arraystretch}{1}
	\caption{ Parameter Setting in the Simulation. }\label{table} \centering \tabcolsep=5pt
	\begin{tabular}{p{4cm}<{\raggedright}|p{3cm}<{\raggedright}}	 	
		\hline		
		\textbf{Parameter} & \textbf{Setting}\\	
		\hline
		Time taken for completing a global iteration and the consensus process $t$ &  2 s \\	
		\hline
		Unit of time taken for data collection and process $c$  &  [1,15]  \\	
		\hline
		Duration from finishing data collection to the beginning of the next data collection phase $a$  & [1,15]  \\
		\hline
		Unit profit for the satisfaction $\beta$  &  20  \\	
		\hline		
		Maximum tolerant AoI $K$ &  200 s \\	
		\hline		
		Maximum tolerant latency $H$ &  50 s\\
		\hline	
	\end{tabular}\label{table3}
\end{table}
	
	\begin{figure}[htbp]
		\centering
		\includegraphics[width=0.48\textwidth]{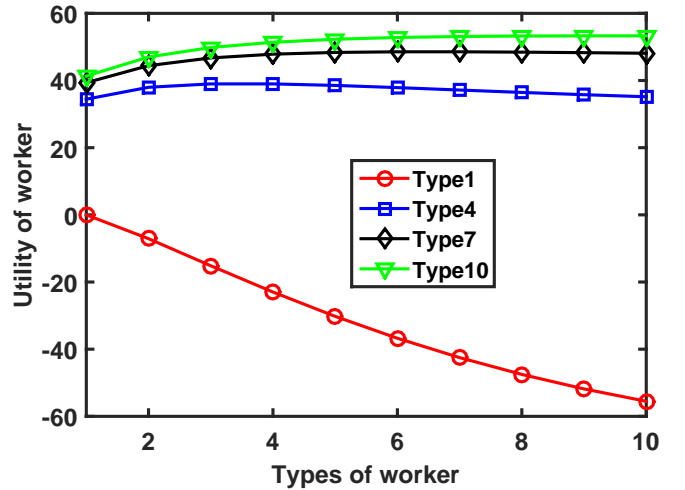}
		\caption{Utilities of workers under different types.}  \label{fig:case2_type}
	\end{figure}

	\begin{figure*}[htbp]
		\centering
		\subfloat[ Utility of the service provider.]
		{\includegraphics[width=0.45\textwidth]{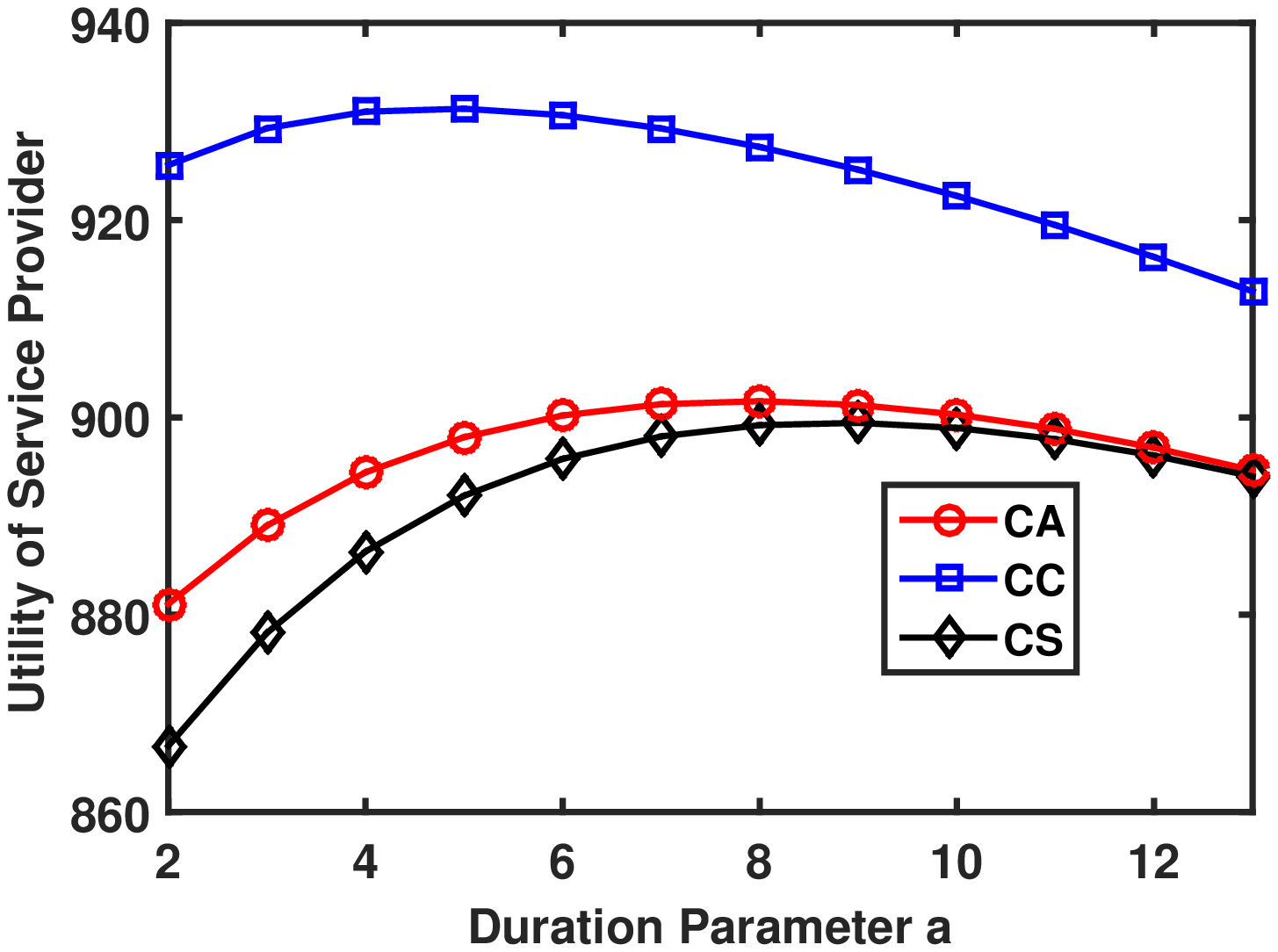}\label{fig:case2_u_s}}
		\subfloat[  Utilities of workers.]
		{\includegraphics[width=0.45\textwidth]{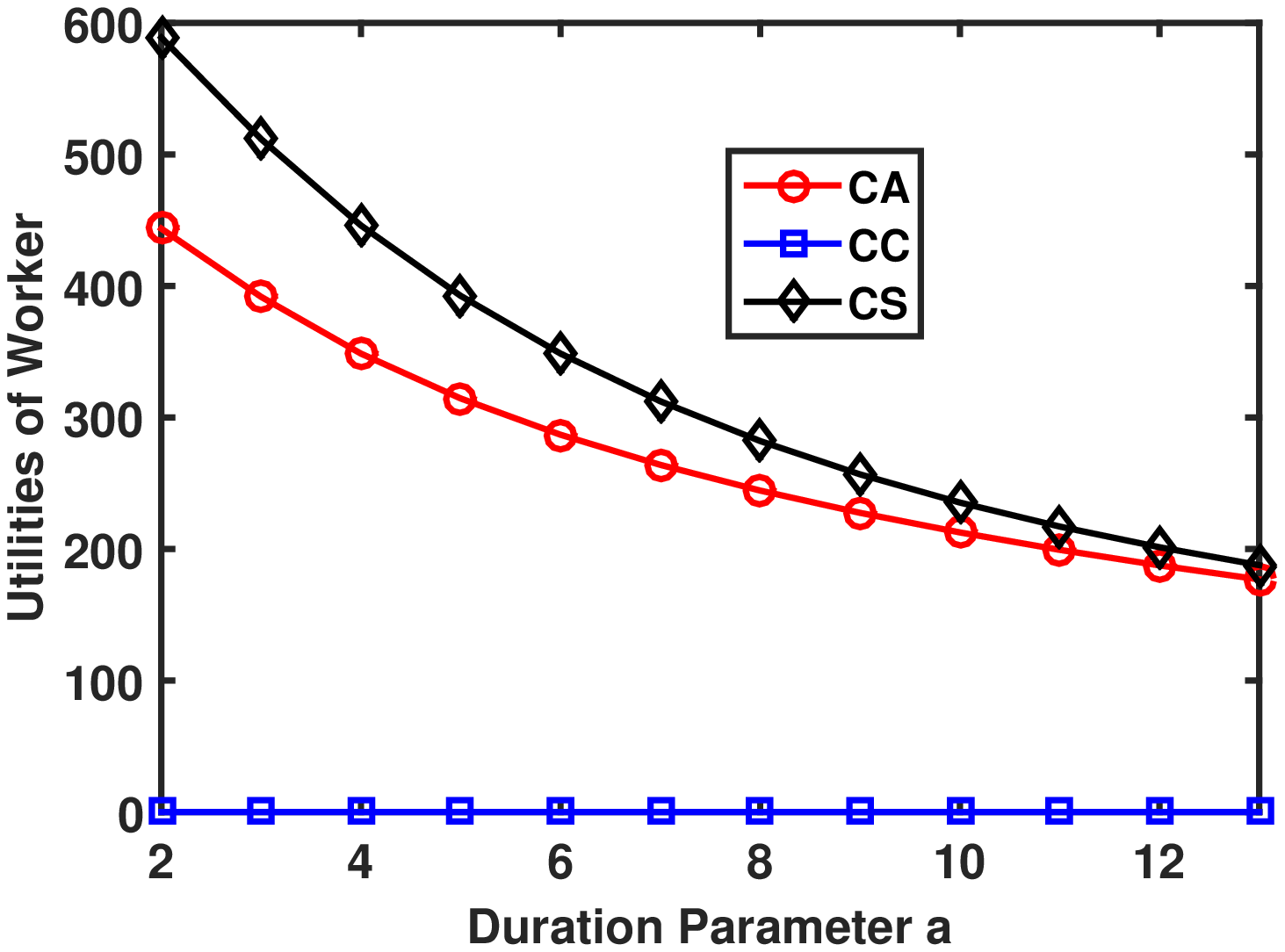}\label{fig:case2_u_n}}
		\caption{ Utility vs. Duration Parameter $a$.}\label{fig:U_parameter_c}
	\end{figure*}

	\begin{figure*}[htbp]
		\centering
		\subfloat[ Number of update cycles.]
		{\includegraphics[width=0.5\textwidth]{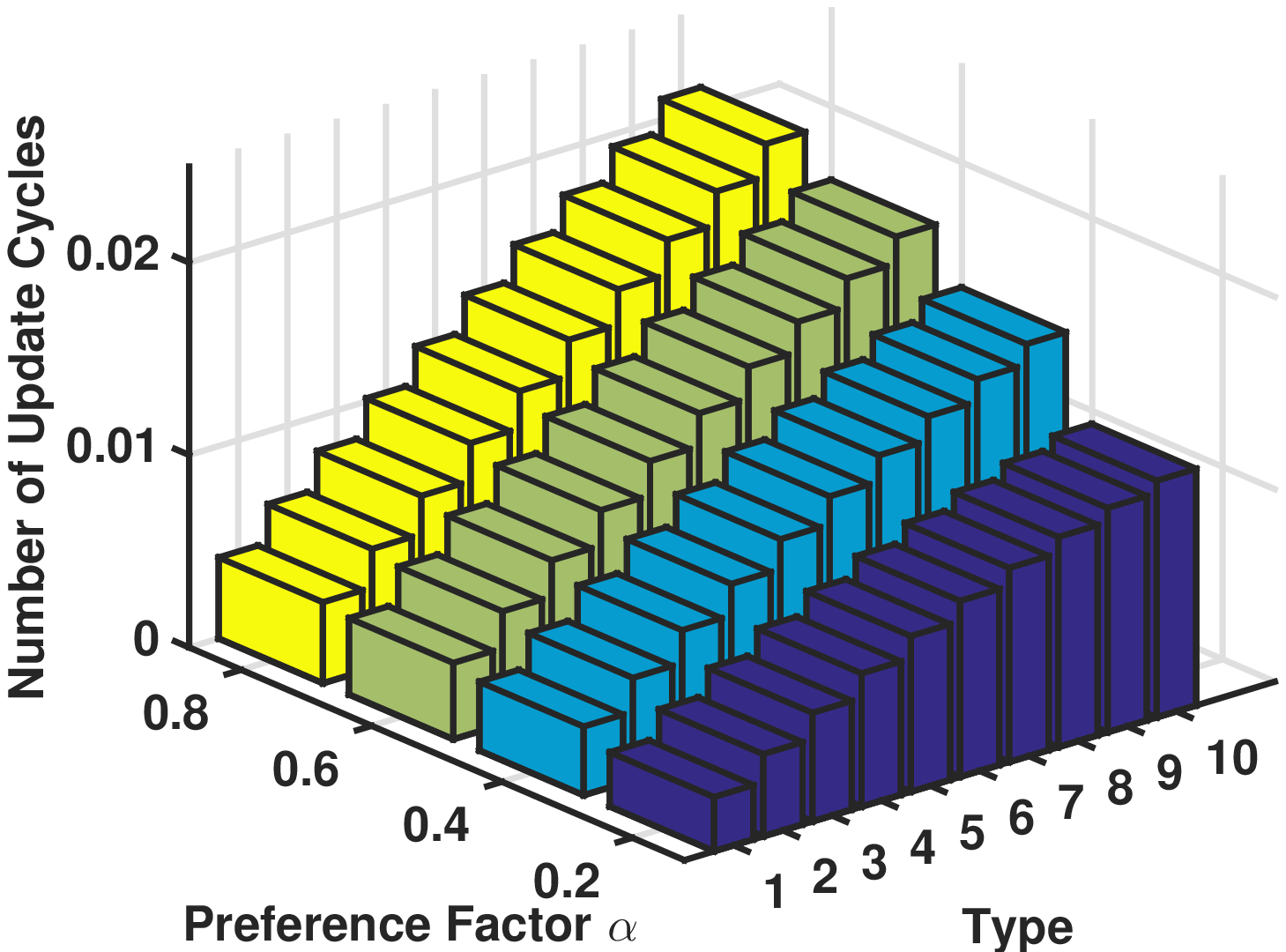}\label{fig:case2_u_s}}
		\subfloat[  Reward.]
		{\includegraphics[width=0.5\textwidth]{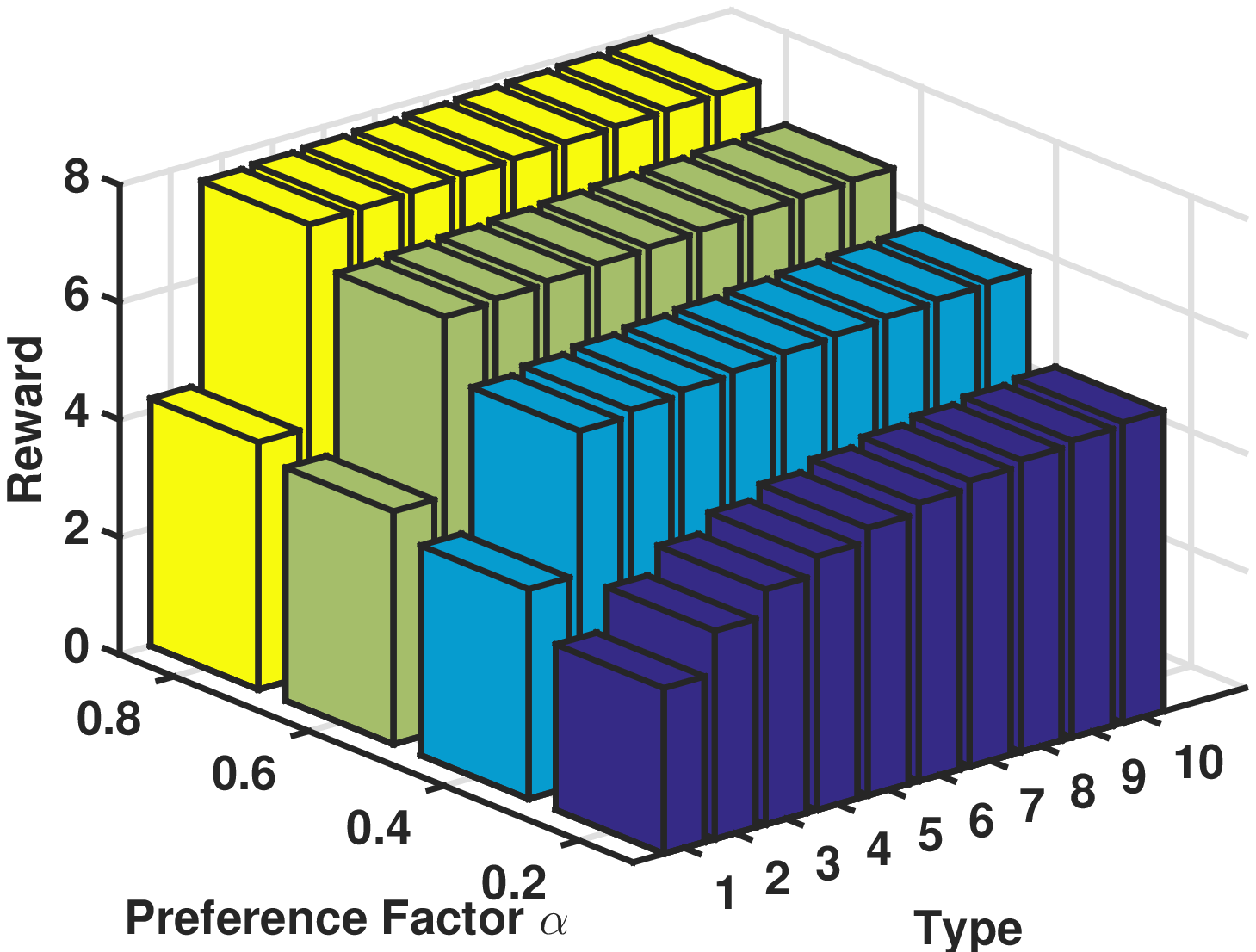}\label{fig:case2_u_n}}
		
		\subfloat[  Utility of the service provider.]
		{\includegraphics[width=0.5\textwidth]{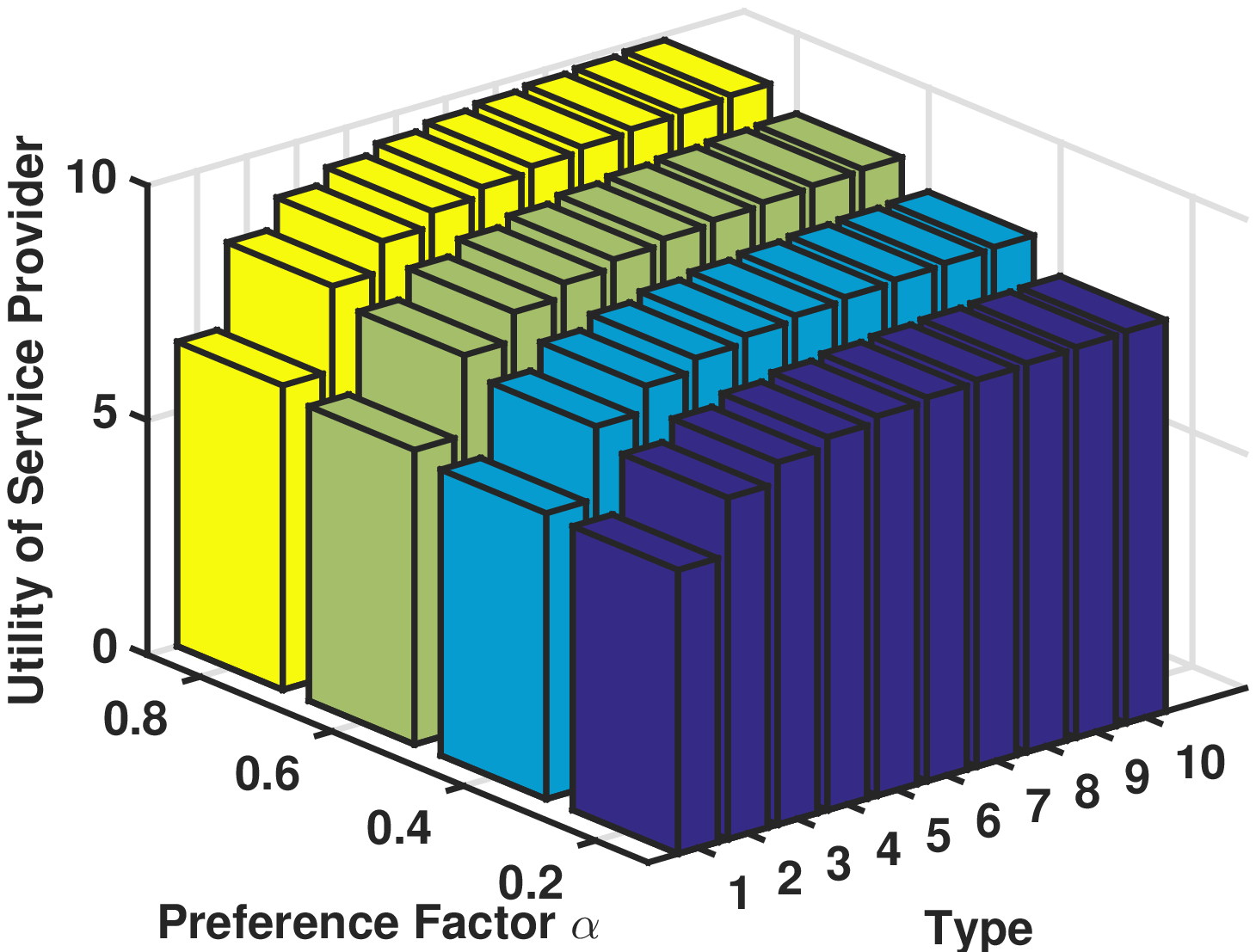}\label{fig:case2_u_n}}	
		\subfloat[  Utilities of workers.]
		{\includegraphics[width=0.5\textwidth]{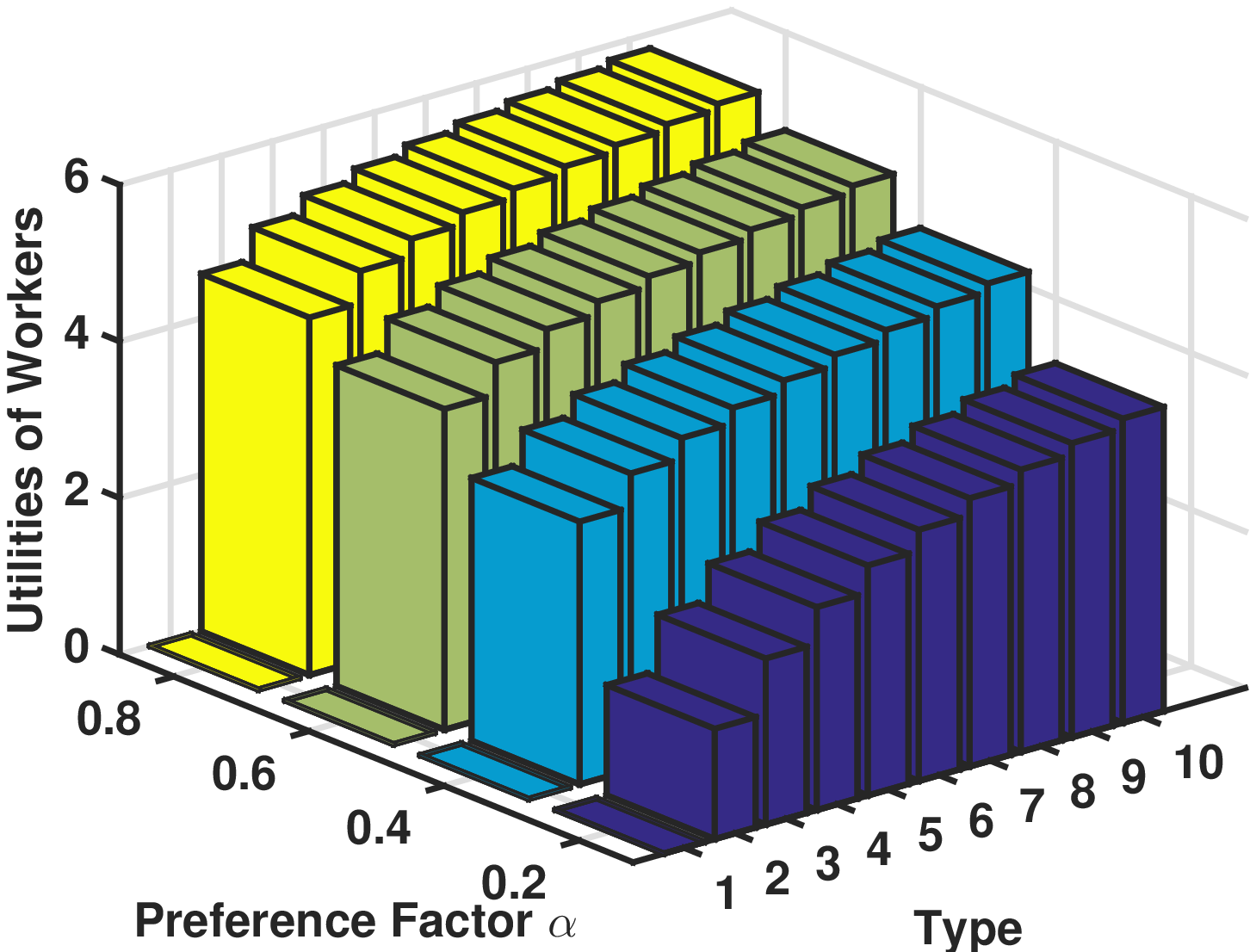}\label{fig:case2_u_n}}
		\caption{Preference Factor $\alpha$ vs. Contract Items and Utilities.}\label{fig:Parameter_alpha_fixed_c}
	\end{figure*}

\section{Numerical Results}\label{NR}
We implement the proposed cross-chain empowered FL framework by using PySyft and the Fisco Bcos blockchain  with a cross-chain platform named ``WeCross". For the simulation setting of the proposed incentive mechanism, we consider $M = 20$ workers and the type-$n$ follows the uniform distribution that is distributed in the range of $[0.001,0.01]$. Similar to  \cite{lim2020information,zhou2021towards, zhang2018towards}, the main parameters are listed in Table \ref{table3}. We compare the proposed Contract-based incentive mechanism with Asymmetric information (CA)	
with  other incentive mechanisms: i) Contract-based incentive mechanism with Complete information  (CC) that the data cost types of workers are  known by the service provider, ii) Contract-based incentive mechanism with Social maximization (CS) \cite{xiong2020multi} that the service provider maximizes the social welfare under information asymmetry \cite{9739801}.

	Fig. \ref{fig:case2_type} shows the feasibility of contract items.
	We can observe that, the utilities of  workers are increasing with the increasing types of the workers. As expected, each worker  selects the contract item corresponding to  its own type that exactly maximizes its own utility. For example, the type-4 workers obtain the optimal utility only when they choose the type-4 contract. The utilities of the workers with higher types are larger than those with lower types. Therefore, the results in Fig. \ref{fig:case2_type} validate that the contract items meet the IC and IR conditions in the proposed schemes \cite{9739801}.

	Fig. \ref{fig:U_parameter_c} shows the utilities of the service provider and the workers in terms of duration parameter $a$ under different incentive mechanisms. 
	From Fig. \ref{fig:U_parameter_c}, we can observe that the utility of the service provider first increases and then decreases  with the increasing duration parameter $a$. It means that there exists an optimal duration parameter $a$ for the workers. For a given parameter $a$, the service provider has the best utility under the CC mechanism with complete information, in which the service provider can design contract items to only maximize its benefit and set the benefit of workers as zero, which  is not so practical in the industrial metaverse scenarios. The proposed CA mechanism is practical in the real world and  has better utility than that of the CS mechanism. Meanwhile, the workers obtain the optimal utilities under the CS mechanism. And the  utilities of the workers under the CA mechanism are better than those under  the CC mechanism.

	Fig. \ref{fig:Parameter_alpha_fixed_c} shows the effect of preference factor $\alpha$ on contract items and utilities under adjustable update phase with fixed idle phase of workers.
	For a fixed type, the increasing $\alpha$ brings a larger number of update cycles and a larger reward. 
	The reason is that the growing $\alpha$ indicates that the service provider prefers to small AoI than service delay, which causes the increase of the update phase. The increase of the update phase leads to an improved reward and a larger number of update cycles. As a result, the utilities of the workers also grow.
	Moreover, the increasing $\alpha$ improves the utility of the service provider as well. The reason is that $\alpha$ has a linearly increasing relationship with the utility of the service provider when $K - \overline{A}_n > H - \overline{D}_n, \forall n \in N $. Conversely, $\alpha$ has a linearly decreasing relationship with the utility of the service provider when $K - \overline{A}_n < H - \overline{D}_n, \forall n \in N $.

	\section{Conclusion}
	In this paper, we studied data privacy protection issues and incentive mechanism design for the industrial metaverse. We proposed a privacy-preserving framework for data training by federated learning on both virtual space and physical space of the industrial metaverse. Cross-chain technology is utilized to design a decentralized FL architecture with a main chain and multiple subchains for secure model training. Furthermore, to improve service quality of time-sensitive learning tasks, we introduce age of information as the metric of data freshness and design an AoI based contract theory model for incentivizing  IIoT nodes contributing fresh sensing data. Numerical results show the efficiency of the proposed framework and incentive mechanism for the industrial metaverses.

	\bibliographystyle{IEEETran}
	\bibliography{ref_wu}
	
\end{document}